\newtheorem*{corollary}{Corollary}
\newtheorem{theorem}{Theorem}
\newcommand{\supp}{\mathrm{supp}}
\newcommand{\bra}[1]{\langle #1|}
\newcommand{\ket}[1]{|#1\rangle}
\newcommand{\ip}[2]{\langle #1|#2\rangle}
\newcommand{\op}[2]{|#1\rangle \langle #2|}
\newcommand{\slocc}{\overset{\underset{\mathrm{SLOCC}}{}}{\longrightarrow}}
\newcommand{\Span}{\textrm{span}}
\begin{document}
\date{\today}
\title{\Large {\bf Tripartite to Bipartite Entanglement Transformations and Polynomial Identity Testing}}
\author{Eric Chitambar$^1$}
\email{echitamb@umich.edu}
\author{Runyao Duan$^{2}$}
\email{Runyao.Duan@uts.edu.au}
\author{Yaoyun Shi$^{3}$}
\email{shiyy@eecs.umich.edu} \affiliation{
$^1$Physics Department, University of Michigan, 450 Church Street,
Ann Arbor, Michigan 48109-1040, USA.\\
$^2$Department of Computer Science and Technology, Tsinghua
University, Beijing 100084, China and Center for Quantum Computation
and Intelligent Systems (QCIS), Faculty of Engineering and
Information Technology, University of Technology, Sydney, NSW 2007,
Australia
\\
$^3$Department of Electrical Engineering and Computer Science,
University of Michigan, 2260 Hayward Street, Ann Arbor, MI
48109-2121, USA}

\begin{abstract}
We consider the problem of deciding if a given
three-party entangled pure state can be converted, with a non-zero
success probability, into a given
two-party pure state through local quantum operations and classical communication.
We show that this question is equivalent to the
well-known computational problem of deciding if a multivariate polynomial is identically zero.
Efficient randomized algorithms developed to study the latter can
thus be applied to the question of tripartite to bipartite
entanglement transformations.
\end{abstract}

\pacs{03.65.Ud, 89.70.Eg} \maketitle

A basic question concerning quantum entanglement is whether it can
be transformed in a particular manner.  More specifically, given a
multipartite system originally in some entangled state, what are the
possible final states the system can realize if the only allowed
quantum operations are performed locally, one subsystem at a time
while assisted by global classical communication?  Such protocols
are called \textit{Local Operations and Classical Communication (LOCC)},
and any LOCC transformation
that yields a certain final state with just a nonzero probability is
called \textit{Stochastic LOCC (SLOCC)}.  For some initial state $\ket{\psi}$ and
target state $\ket{\phi}$, an SLOCC conversion between $\ket{\psi}$
and $\ket{\phi}$ is denoted as $\ket{\psi}\slocc\ket{\phi}$.
In this  Letter, we study the computational problem of 
deciding if $\ket{\psi}\slocc\ket{\phi}$ is feasible, given the classical description of $\ket{\psi}$ and $\ket{\phi}$.

In general, deciding the feasibility of state transformations
is difficult, with the challenge becoming more formidable as more
parties are involved.  This increase in difficulty can be made
precise by using the language of computational complexity theory~\cite{Sipser-1997a},
which groups problems according to the amount of resources needed to
solve them. For SLOCC transformations of \textit{bipartite} pure
states, the computational resources in deciding conversion
feasibility increases polynomially in the dimension of either
subsystem undergoing the transformation \cite{Vidal-1999a}; thus the
SLOCC bipartite conversion problem belongs to the complexity class
P.  On the other hand, it was recently shown that the SLOCC
convertibility between \textit{tripartite} states is an NP-Hard
problem, implying that no polynomial time decision algorithm exists if P$\not=$ NP, where NP is a central complexity class
that includes numerous naturally occurring problems not known to be in P.
These results can be
interpreted as a formal indication that it is generally much easier
to study pure state entanglement transformations in bipartite
systems than in tripartite systems.

A natural next step is to examine a hybridization of these two
transformation classes and see where the problem of tripartite to
bipartite SLOCC entanglement conversions fits in the complexity
spectrum.  This is a subset of tripartite transformations in which
Alice, Bob and Charlie initially share a three-way entangled state
$\ket{\psi}_{ABC}$, but they end with a state
$\ket{\phi}_{AB}$ where Alice and Bob are still entangled
but Charlie is completely unentangled from the other two.

In this Letter, we show that deciding tripartite to bipartite
convertibility is equivalent to Polynomial Identity Testing (PIT),
which is the task of determining whether two polynomials given in algebraic formulas are equivalent
(or equivalently, if a polynomial is identically zero).
PIT is a classical problem in theoretical computer science
with many important applications, such as in perfect matching~\cite{Edmonds-1967a},
multiset equality testing~\cite{Blum-1995a}, and primality testing~\cite{Agrawal-2003a}.
In particular, it is known that PIT admits a polynomial time {\em randomized} algorithm but
is not known to have a {\em deterministic} polynomial time algorithm.
The failure to ``derandomize'' the algorithm is shown~\cite{Kabanets-2004a} to arise
from the difficulty of proving super-polynomial lower bounds on general computation models:
if PIT is in P, then some other problems would not have an efficient algorithm.
Thus PIT has been a central problem to the fundamental (and open) question if randomness is useful
in computation (i.e. the BPP versus P question).  The equivalence of the convertibility question to PIT implies
the former also admits a randomized polynomial time algorithm, while it will remain
a difficult question if it has a polynomial time algorithm. Together with
the previously mentioned complexity results on other convertibility questions,
those findings establish close relationships between
quantum entanglement and computational complexity theory.

In the rest of this Letter, we present the formal statement of our result
and its proof. We start with the necessary notation and some observations.
For a tripartite pure state $\ket{\psi}_{ABC}$, Alice and Bob's
joint state is described by
$\rho^\psi_{AB}=Tr_C({}_{ABC}\op{\psi}{\psi}_{ABC})$,
with a mixed state representation $\sum_{i=1}^np_i\op{e_i}{e_i}$,
where $\ip{e_i}{e_j}=\delta_{ij}$ and $p_i>0$.  The ``subnormalized''
eigenstates $\{\ket{\tilde{e}_i}=\sqrt{p_i}\ket{e_i}\}_{i=1\cdots
n}$ span the space $supp(\rho^\psi_{AB})$ called the support of
$\rho^\psi_{AB}$. An ensemble of pure states
$\{\ket{\tilde{q}_j}\}_{j=1\cdots t}$ satisfies 
$\sum_{j=1}^t \ket{\tilde{q}_j}\bra{\tilde{q}_j} = \rho^\psi_{AB}$
if and only if there exists a unitary matrix
$U=[u_{ij}]_{1\le i, j\le t}$, such that
\begin{equation}\label{eqn:density}
\ket{\tilde{q}_j}=\sum_{i=1}^n u_{ij}\ket{\tilde{e_i}},
\end{equation}
where
$\ket{\tilde{e}_i}$ is the zero vector for $i> n$ \cite{Hughston-1993a}.  
Furthermore, there is a one-to-one correspondence
between a measurement of Charlie consisting of rank-one measure operators
and a unitary matrix $U$ that defines the resultant pure state ensemble through Eq.~(\ref{eqn:density})
shared by Alice and Bob following Charlie's measurement.

A consequence of these facts is contained in the
following theorem whose statement requires one final bit of
terminology.  For a bipartite pure state $\ket{\phi}$, its
\textit{Schmidt rank} refers to the minimum number of product states
needed to express it and is denoted by $rk(\ket{\phi})$.
Equivalently, for arbitrary bases $\{\ket{i}_A\}_{i=1\cdots d_A}$
and $\{\ket{i}_B\}_{i=1\cdots d_B}$ of $H_A$ and $H_B$ respectively,
$\ket{\phi}$ can be uniquely identified with a $d_B\times d_A$
matrix $\Phi$ mapping $H_A$ to $H_B$ by $\ket{\phi}=(I\otimes
\Phi)\sum_{i=1}^{d_A}\ket{i}_A\ket{i}_A$, and $rk(\ket{\phi})$
equals the matrix rank of $\Phi$.

\begin{theorem} Let $\ket{\phi}_{AB}$ be a bipartite pure state with Schmidt rank $d$.
Then for tripartite state $\ket{\psi}_{ABC}$, $\ket{\psi}_{ABC}\slocc\ket{\phi}_{AB}$ if and only if
there exists $\ket{\phi'}_{AB}\in \supp(\rho_{AB})$ such that $rk(\ket{\phi'}_{AB})\geq d$.
\end{theorem}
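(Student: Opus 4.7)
The plan is to relate $\ket{\psi}_{ABC}\slocc\ket{\phi}_{AB}$ to properties of $\supp(\rho^\psi_{AB})$ by combining the parametrization in Eq.~(\ref{eqn:density}) with the bipartite SLOCC criterion of \cite{Vidal-1999a}: $\ket{\alpha}\slocc\ket{\beta}$ iff $rk(\ket{\alpha})\ge rk(\ket{\beta})$. Write the Schmidt decomposition across the AB:C cut as $\ket{\psi}=\sum_{i=1}^{n}\ket{\tilde e_i}_{AB}\ket{f_i}_C$ with $\{\ket{f_i}_C\}$ orthonormal, and use the standard characterization that pure-state SLOCC convertibility is equivalent to the existence of local operators with $(M_A\otimes M_B\otimes M_C)\ket{\psi}=\lambda\ket{\phi}_{AB}\ket{c}_C$ for some nonzero $\lambda$ and some $\ket{c}_C$, obtained by composing the Kraus operators applied along any successful branch of an LOCC protocol.

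For the forward direction, contract the identity above with $\bra{c}_C$ on Charlie's space. Setting
\begin{equation}
\ket{\phi'}:=\bra{c}_C(I\otimes I\otimes M_C)\ket{\psi}=\sum_i\braket{c}{M_C}{f_i}\,\ket{\tilde e_i}_{AB},
\end{equation}
one obtains $(M_A\otimes M_B)\ket{\phi'}=\lambda\ket{\phi}_{AB}$. Since $\ket{\phi'}$ is a linear combination of the spanning set $\{\ket{\tilde e_i}\}$ of $\supp(\rho^\psi_{AB})$, it lies in the support, and the bipartite SLOCC $\ket{\phi'}\slocc\ket{\phi}$ thus exhibited forces $rk(\ket{\phi'})\ge rk(\ket{\phi})=d$ by Vidal's criterion.

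For the converse, given $\ket{\phi'}\in\supp(\rho^\psi_{AB})$ with $rk(\ket{\phi'})\ge d$, expand $\ket{\phi'}=\sum_i u_i\ket{\tilde e_i}$ and set $\ket{v}_C=\sum_i u_i^{*}\ket{f_i}_C$. After rescaling $\ket{v}_C$ by a positive constant so that $\|\ket{v}_C\|\le 1$, define Charlie's Kraus operators $M_0=\ket{0}\bra{v}_C$ and $M_1=\sqrt{I-M_0^\dagger M_0}$; these form a valid two-outcome POVM. A direct calculation yields $(I\otimes I\otimes M_0)\ket{\psi}=\mu\,\ket{\phi'}_{AB}\ket{0}_C$ with $\mu\neq 0$, so outcome $0$ occurs with nonzero probability and leaves AB in a scalar multiple of $\ket{\phi'}$. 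Since Schmidt rank is scale-invariant, Vidal's criterion lets Alice and Bob complete the bipartite SLOCC to $\ket{\phi}_{AB}$.

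The main obstacle I anticipate is justifying the initial reduction of an arbitrary multi-round LOCC branch to a single product $M_A\otimes M_B\otimes M_C$ acting on $\ket{\psi}$; this is the standard but nontrivial fact that the sequence of Kraus operators applied along any successful branch can be composed into three local operators, one per party. Once this is granted, the rest of the proof reduces to the algebraic contraction with $\bra{c}_C$ in one direction, an explicit POVM construction in the other, and two invocations of the bipartite Schmidt-rank criterion.
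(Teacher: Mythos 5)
Your proposal is correct and follows essentially the same route as the paper: the forward direction uses the product-operator characterization of SLOCC and contracts out Charlie's system to exhibit $\ket{\phi'}\in\supp(\rho^\psi_{AB})$ with $(M_A\otimes M_B)\ket{\phi'}\propto\ket{\phi}$, and the converse has Charlie measure along the vector dual to the expansion coefficients of $\ket{\phi'}$ (your POVM $\{M_0,M_1\}$ is just the paper's projective measurement $\{\op{P}{P},I-\op{P}{P}\}$ in slightly different clothing), followed by Vidal's bipartite Schmidt-rank criterion. The one reduction you flag as a potential obstacle is exactly the standard fact the paper takes from D\"ur--Vidal--Cirac, so nothing further is needed; incidentally, your use of $u_i^{*}$ in defining $\ket{v}_C$ is more careful than the paper's own statement.
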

\begin{proof}[\bf Proof]  Let $\ket{\psi}_{ABC}=\sum_{i=1}^{n}\ket{\tilde{e}_i}_{AB}\ket{e_i}_C$
be a Schmidt decomposition of $\ket{\psi}_{ABC}$. ($\Rightarrow$) If
the transformation is possible, then there are linear operators $A$,
$B$, $C$ such that $A\otimes B\otimes
C\ket{\psi}_{ABC}=\ket{\phi}_{AB}\ket{0}_C$ \cite{Dur-2000a}. 
Consequently, $\ket{\phi}_{AB}=A\otimes B\ket{\phi'}$, for 
$\ket{\phi'}=\sum_{i=1}^{n} (C\ket{e_i}_C)\ket{\tilde{e}_i}_{AB}$.
Since $\ket{\phi'} \in \supp(\rho_{AB})$ and $A\otimes B$ cannot increase the Schmidt
rank of $\ket{\phi'}$, we have $rk(\ket{\phi'})\geq d$.
($\Leftarrow$)  Conversely, assume the existence of $\ket{\phi'}\in \supp(\rho_{AB})$ with
$\langle \phi'|\phi'\rangle=1$ and $rk(\ket{\phi'})\geq d$. Then $\ket{\phi'}$
has a unique representation $\ket{\phi'}=\sum_{i=1}^n \alpha_i \ket{\tilde{e}_i}_{AB}$, 
for some complex numbers $\alpha_i$, $1\le i\le n$, with $w=\sum_{i=1}^n |\alpha_i|^2\ge 1$.
Let $\ket{P}= \sum_{i=1}^n \alpha_i \ket{e_i}_C/\sqrt{w}$,
If Charlie applies the projective measurement $\{ \ket{P}\bra{P}, I_C-\ket{P}\bra{P}\}$, with probability
$1/w>0$ he observes $\ket{P}$ and Alice and Bob are left with $\ket{\phi'}$.  Alice and Bob can then convert
$\ket{\phi'}$ into $\ket{\phi}$ with nonzero probability because the
target state's Schmidt rank is not higher~\cite{Vidal-1999a}.
\end{proof}

We note that the above result generalizes the already established
condition of SLOCC convertibility between bipartite pure states of
ref. \cite{Vidal-1999a}.  If the initial joint state of Alice and
Bob is pure then the transformation becomes
$\ket{\psi}_{AB}\ket{0}_C\slocc\ket{\phi}_{AB}$.  In this
case, $\ket{\psi}_{AB}$ is the only state in $supp(\rho_{AB})$ and
the transformation is possible if and only if $rk(\ket{\psi}_{AB})\geq
rk(\ket{\phi}_{AB})$.

A unidirectional protocol like that described above is often called
``one-shot'' as Charlie's involvement consists of just making a
measurement with rank-one measure operators and broadcasting the
result to Alice and Bob. Hence $\ket{\psi}_{ABC}$ can be converted
to $\ket{\phi}_{AB}$ with a nonzero probability if and only
if it can be done so by a one-shot protocol.  The situation is
strikingly different in the case of deterministic transformations
since there exist tripartite to bipartite conversions that require
bidirectional collaboration between the parties in order to occur
with probability one \cite{Gour-2006a}.

According to Theorem 1, the problem of deciding conversion is
reduced to whether a Schmidt rank $d$ state exists in some subspace
of $H_A\otimes H_B$.  This question is a generalization of one
sometimes referred to as Edmonds' Problem \cite{Edmonds-1967a}: if
$M(d_A,d_B)$ is the linear space of $d_B\times d_A$ matrices with
complex coefficients and $V$ is some subspace of $M(d_A,d_B)$,
decide whether there exists a rank $d=\min\{d_A,d_B\}$ (i.e. full
rank) matrix in $V$. 
Intuitively, one would expect that an upper
bound exists on the dimension $s$ of subspaces containing only
states of rank strictly less than $d$.  Indeed, Flanders provides
the bound $s<d\cdot \max\{d_A,d_B\}$
\cite{Flanders-1962a,Cubitt-2008a}.  Thus,
\begin{corollary}
If $rk(\ket{\phi})=d$ and $\dim[supp(\rho^\psi_{AB})]\geq d\cdot
\max\{d_A,d_B\}$, then
$\ket{\psi}_{ABC}\slocc\ket{\phi}_{AB}$.
\end{corollary}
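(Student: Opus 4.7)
The plan is to combine Theorem~1 with Flanders' bound stated just before the corollary, treating the whole argument as a straightforward contrapositive. The bridge between the two is the identification described earlier in the Letter: via the correspondence $\ket{\phi}=(I\otimes \Phi)\sum_i\ket{i}_A\ket{i}_A$, bipartite pure states in $H_A\otimes H_B$ are in one-to-one linear correspondence with matrices in $M(d_A,d_B)$, and the Schmidt rank of $\ket{\phi}$ equals the matrix rank of $\Phi$. Under this identification, $\supp(\rho^\psi_{AB})$ is a linear subspace $V\subseteq M(d_A,d_B)$ whose dimension (as a subspace of matrices) equals $\dim[\supp(\rho^\psi_{AB})]$.

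First I would observe that, by Theorem~1, it suffices to exhibit some $\ket{\phi'}\in \supp(\rho^\psi_{AB})$ with $rk(\ket{\phi'})\geq d$. Equivalently, I need to show that the matrix subspace $V$ cannot consist solely of matrices of rank strictly less than $d$.

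Next I would invoke Flanders' theorem in its contrapositive form. Flanders says that any subspace of $M(d_A,d_B)$ all of whose elements have rank $<d$ satisfies $s < d\cdot \max\{d_A,d_B\}$. Since our hypothesis gives $\dim V = \dim[\supp(\rho^\psi_{AB})] \geq d\cdot \max\{d_A,d_B\}$, the subspace $V$ fails the Flanders bound and therefore must contain at least one matrix $\Phi'$ with $\mathrm{rank}(\Phi')\geq d$. The corresponding state $\ket{\phi'}\in \supp(\rho^\psi_{AB})$ has Schmidt rank at least $d$.

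Finally, applying Theorem~1 with this $\ket{\phi'}$ yields $\ket{\psi}_{ABC}\slocc\ket{\phi}_{AB}$, completing the proof. There is no real obstacle here beyond correctly setting up the matrix-space identification; the argument is essentially the observation that the dimension hypothesis saturates Flanders' bound and then a direct appeal to Theorem~1.
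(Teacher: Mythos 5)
Your proof is correct and is precisely the argument the paper intends: the corollary is stated as an immediate consequence of Theorem~1 together with Flanders' bound $s < d\cdot\max\{d_A,d_B\}$ on subspaces of $M(d_A,d_B)$ containing only matrices of rank less than $d$. Your contrapositive reading of Flanders plus the matrix--state identification is exactly the (implicit) proof in the text.
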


It requires more work when $\dim[supp(\rho^\psi_{AB})]< d\cdot \max\{d_A,d_B\}$.
 As recognized by previous investigators, determining whether a
matrix subspace is singular can be cast into a polynomial identity
testing question \cite{Gurvits-2003a,Lovasz-1989a}.  We generalize
their approach to the subject at hand.  Letting
$\{\tilde{\Pi}_i\}_{i=1 \cdots n}$ denote the nonzero subnormalized
eigenstates of $\rho^\psi_{AB}$ in matrix form, any state in
$supp(\rho^\psi_{AB})$ can be expressed as
\begin{equation}\label{0}
\Pi(\textbf{u})=u_{1}\tilde{\Pi}_1+\cdots+u_{n}\tilde{\Pi}_n,
\end{equation}
where $\textbf{u}=(u_1,\cdots,u_n)$ is an $n$-dimensional complex
vector. Construct the following real-valued function
\begin{equation}
\label{1} g(\textbf{u})=\sum_{\kappa_d} |\det(\kappa_d)|^2,
\end{equation} where $\kappa_d$ ranges over the set of $d\times d$ sub-matrices of
$\Pi(\textbf{u})$ and $\det(\kappa_d)$ denotes the determinant
of $\kappa_d$. Note that $g$ is a nonnegative polynomial of degree no
greater than $2d$ in the real variables $\{a_i,b_i\}_{i=1\cdots n}$
where $u_i=a_i+{\bf i}b_i$. Then deciding whether $g$ is identically
zero is the same as determining whether a Schmidt rank $d$ state is
obtainable from $\ket{\psi}_{ABC}$ since $g(\textbf{u})\equiv 0$ if and only if 
no matrix of rank at least $d$ exists in the span of
$\{\tilde{\Pi}_i\}_{i=1\cdots n}$.

As mentioned above, Polynomial Identity Testing is a classic problem in theoretical
computer science with many important applications
\cite{Motwani-1995a,Lovasz-1989a, Kabanets-2004a}.  
In general, given two polynomials $f(\textbf{x})$ and
$p(\textbf{x})$, it can always be decided if $f=p$ by multiplying
out the polynomials and checking whether their coefficients match.
However, the number of multiplications required for this procedure
scales exponentially in the degree of the polynomials and at the
present no sub-exponential deterministic algorithm is known for polynomial
identity testing \cite{Motwani-1995a, Arora-2009a}.

On the other hand, if one relaxes the deterministic condition,
randomized polynomial-time algorithms exist that can decide with a
high probability of success \cite{Schwartz-1980a, Arora-2009a}.  A
standard algorithm uses the Schwartz-Zippel lemma which states that
for some $n$-variate polynomial $f(x_1,\cdots,x_n)$ over a field $\mathbb{K}$
and having degree no greater than $d$, if $f$ is not identically
zero, then $\text{Prob}[f(x'_1,\cdots,x'_n)=0]\leq\frac{d}{|X|}$
where each $x'_i$ is independently sampled from some finite set
$X\subset \mathbb{K}$. Thus, to test with success probability at least
$1-\frac{d}{|X|}$ whether a polynomial $f$ is identically zero, one
evaluates $f$ on values chosen from set $X$ and decides a zero
identity if and only if $f(x'_1,\cdots,x'_n)=0$.  Using this algorithm on the
analysis of $g$ in Eq. (\ref{1}) allows for the following
classification of the tripartite to bipartite conversion problem.

\begin{theorem}
\label{thm2} There exists a polynomial time randomized algorithm that,
given states $\ket{\psi}_{ABC}$ and $\ket{\phi}_{AB}$, decides correctly
if $\ket{\psi}_{ABC}\slocc\ket{\phi}_{AB}$ is feasible with
probability $\ge 2/3$.
\end{theorem}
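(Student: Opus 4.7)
The plan is to invoke Theorem 1 together with the discussion surrounding Eq.~(\ref{1}) to reduce convertibility to a polynomial identity testing instance, then apply the Schwartz--Zippel lemma to furnish the randomized algorithm.

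First, given the inputs $\ket{\psi}_{ABC}$ and $\ket{\phi}_{AB}$, I would compute $d=rk(\ket{\phi})$ and extract the subnormalized eigenvectors $\{\tilde{\Pi}_i\}_{i=1\cdots n}$ of $\rho^\psi_{AB}$ in matrix form; both steps run in polynomial time via standard linear algebra. By Theorem 1 combined with the observation after Eq.~(\ref{1}), the transformation $\ket{\psi}_{ABC}\slocc\ket{\phi}_{AB}$ is feasible precisely when $g(\textbf{u})\not\equiv 0$, so the decision task reduces verbatim to PIT for $g$.

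Next, I would quantify the parameters of this PIT instance. Writing $u_i=a_i+\mathbf{i}b_i$, the function $g$ is a polynomial in the $2n$ real variables $\{a_i,b_i\}$ of total degree at most $2d$: each minor $\det(\kappa_d)$ contributes degree $d$ in $\textbf{u}$ and hence degree $d$ in the real variables, and the complex conjugate $\overline{\det(\kappa_d)}$ contributes another degree-$d$ factor. Schwartz--Zippel then guarantees that if we choose a finite $X\subset\mathbb{Z}$ with $|X|\ge 6d$ and sample each $a_i$ and $b_i$ independently and uniformly from $X$, then whenever $g\not\equiv 0$ we have $\mathrm{Prob}[g(\textbf{u})=0]\le 2d/|X|\le 1/3$.

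The algorithm is therefore: sample $\textbf{u}$ as above, assemble $\Pi(\textbf{u})=\sum_i u_i\tilde{\Pi}_i$, compute its matrix rank by Gaussian elimination, and output ``feasible'' iff $rk(\Pi(\textbf{u}))\ge d$. Soundness is immediate, since if the transformation is infeasible then $g\equiv 0$ and $rk(\Pi(\textbf{u}))<d$ for every $\textbf{u}$; completeness is the Schwartz--Zippel bound, giving error at most $1/3$. The main subtlety I expect -- and the reason to state the algorithm in terms of ranks rather than $g$ itself -- is that $g$ has $\binom{d_A}{d}\binom{d_B}{d}$ summands, which is not polynomially bounded in general. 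One must avoid writing $g$ out explicitly: testing $g(\textbf{u})\ne 0$ via $rk(\Pi(\textbf{u}))\ge d$ sidesteps this, and together with the polynomial bit-size of the sampled integers ensures the whole procedure runs in polynomial time.
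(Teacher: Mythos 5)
Your proposal is correct and follows essentially the same route as the paper: reduce to testing whether $g(\textbf{u})=\sum_{\kappa_d}|\det(\kappa_d)|^2$ is identically zero, apply Schwartz--Zippel to the degree-$2d$ real polynomial in the $2n$ variables $\{a_i,b_i\}$, and evaluate the test by computing $rk(\Pi(\textbf{u}))$ rather than expanding $g$, which is exactly the paper's workaround for the exponentially many minors. No gaps.
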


In other words, the problem of deciding tripartite to bipartite SLOCC
convertibility belongs to the complexity class BPP, which
consists of decision problems solvable by Bounded-error Probabilistic Polynomial time
algorithms. The error probability $1/3$ can be made exponentially small by repeating
the algorithm and outputting the majority of the outputs of each repetition.

The randomized algorithm described above can be used to construct an LOCC
protocol that completes any feasible tripartite to bipartite
transformation with nonzero probability.  Charlie makes $2n$
independent samplings $a_1,b_1,a_2,b_2,\cdots,a_n,b_n$ from the
integer set $\{1,\cdots,M\}$, where $M$ is an integer larger than
$2d$. Then he constructs
$\textbf{u}=(a_1+\textbf{i}b_1,\cdots,a_n+\textbf{i}b_n)$ and
evaluates $g(\textbf{u})$ in Eq. (\ref{1}). If $g(\textbf{u})$ is
nonzero, the state $\ket{P}=\frac{1}{\sqrt{N}}\sum_{i=1}^n
u_i\ket{e_i}$ is formed with $N$ being the appropriate normalization factor. By the
Schwartz-Zippel lemma, such a $\textbf{u}$ can be found with success
probability at least $1-\frac{2d}{M}$ in the case that $g$ is not
identically zero, and an appropriate $M$ can be chosen to make this
probability sufficiently large. Then Charlie performs a
projective measurement $\{\op{P}{P},I_C-\op{P}{P}\}$, and Alice and Bob
will share the unnormalized state
$\Pi_{\phi'}(\textbf{u})=\sum_{i=1}^n u_i\tilde{\Pi}_i$ with nonzero
probability when the outcome is $\ket{P}$. They will then be
able to probabilistically obtain the desired target state
$\ket{\phi}$ as $\ket{\phi'}$ is with Schmidt rank at least $d$. One
drawback of the above procedure is that we need to evaluate $g(\textbf{u})$.
However, the explicit form of $g$ is unknown and may be very
complicated as we need to sum over all determinants of $d\times d$
sub-matrices of $\Pi(\textbf{u})$. Fortunately, we can avoid
evaluating $g(\textbf{u})$ directly by checking whether the matrix
rank of $\Pi(\textbf{u})$ is larger than $d$, which can be done
efficiently in polynomial time of $n$ and $d_{A(B)}$.

It is interesting how the problem can be turned around by reducing
any polynomial identity testing question to a decision of SLOCC
convertibility.  In an important work by Valiant~\cite{Valiant-1979a} (slightly
improved in~\cite{Liu-2006a}), he shows that any
polynomial $p(x_1,\cdots,x_m)$ over a field $\mathbb{K}$ of formula size $e$ can be expressed as the
determinant of some $(e+2)\times(e+2)$ matrix $\Pi_p(x_1,\cdots,x_m)$
with entries in $\{x_1, ..., x_m\}$ and the underlying field
\cite{Valiant-1979a}, and the construction of $\Pi_p$ from $p$
can be done in polynomial time.
Let $\Pi_p=\Pi_0+\sum_{i=1}^m x_i \Pi_i$, where $\Pi_i\in\mathbb{K}^{(e+2)\times (e+2)}$,
$0\le i\le m$. We claim that $\Span\{\Pi_i : 0\le i \le m\}$ contains a nonsingular matrix
if and only if $\Pi_0+\Span\{\Pi_i : 1\le i\le m\}$ does.
The ``only if'' direction is trivial. Assume that $A=\sum_{i=0}^m \alpha_i \Pi_i$
is nonsingular. If $\alpha_0\ne 0$, then $\Pi_0+\sum_{i=1}^m\frac{\alpha_i}{\alpha_0}\Pi_i$
is nonsingular. Otherwise, for sufficiently large $k$, $\det(\Pi_0 + k A)>0$. Thus
$\Pi_0+\Span\{\Pi_i: 1\le i\le m\}$ contains a nonsingular matrix.

To conclude the reduction, for each $i$, $0\le i \le m$, let $\ket{e_i}$ be the bipartite state
corresponding to $\Pi_i$, and $\rho_{AB}=\frac{1}{n}\sum_{i=1}^n\op{e_i}{e_i}$.
Letting $\ket{\psi}_{ABC}$ be a purification of $\rho_{AB}$ and
$\ket{\phi}_{AB}$ any rank $n$ bipartite state, $p(x_1,\cdots,x_m)$
is not identically zero if and only if $\ket{\psi}_{ABC}\slocc\ket{\phi}_{AB}$.

In conclusion, the results of this  Letter help contribute to the
complexity hierarchy of SLOCC pure state transformations.  For
bipartite transformations, the question of convertibility reduces to
matrix rank calculations, which can be done in deterministic polynomial time. 
As seen here, for tripartite to bipartite conversions, determining feasibility is equivalent to testing the identity
of a given polynomial, which can be done in randomized polynomial time, and
whether or not the algorithm can be derandomized is a major open problem in theoretical
computer science. For general tripartite transformations,
no polynomial time algorithm (deterministic or randomized) is known, and 
under the common belief in complexity theory, no polynomial time algorithm exists.

The next question might be how the
complexity of calculating optimal conversion probabilities compares
for transformations involving a different number of parties.  For
bipartite pure state conversions, the problem is already known to
have a polynomial-time solution \cite{Vidal-1999a}.  It is
interesting how in all these cases, abstract computational questions
can be used to solve a seemingly unrelated physical problem of
entanglement conversion and vice versa.  Such results demonstrate
the value of complexity theory in the study of quantum systems and
reveals an important relationship between the two fields.

E. C. thanks Richard Low, Toby Cubitt and Nengkun Yu for many helpful discussions during the preparation of this Letter.  This work was supported by the National Science Foundation of the
United States under Awards~0347078 and 0622033.  R. Duan was also
partially supported by the National Natural Science Foundation of
China (Grant Nos. 60702080, 60736011).

\bibliography{QuantumBib}

\begin{thebibliography}{18}
\expandafter\ifx\csname natexlab\endcsname\relax\def\natexlab#1{#1}\fi
\expandafter\ifx\csname bibnamefont\endcsname\relax
  \def\bibnamefont#1{#1}\fi
\expandafter\ifx\csname bibfnamefont\endcsname\relax
  \def\bibfnamefont#1{#1}\fi
\expandafter\ifx\csname citenamefont\endcsname\relax
  \def\citenamefont#1{#1}\fi
\expandafter\ifx\csname url\endcsname\relax
  \def\url#1{\texttt{#1}}\fi
\expandafter\ifx\csname urlprefix\endcsname\relax\def\urlprefix{URL }\fi
\providecommand{\bibinfo}[2]{#2}
\providecommand{\eprint}[2][]{\url{#2}}

\bibitem[{\citenamefont{Sipser}(1997)}]{Sipser-1997a}
\bibinfo{author}{\bibfnamefont{M.}~\bibnamefont{Sipser}},
  \emph{\bibinfo{title}{Introduction to the Theory of Computation}}
  (\bibinfo{publisher}{PWS Publishing Co.}, \bibinfo{year}{1997}).

\bibitem[{\citenamefont{Vidal}(1999)}]{Vidal-1999a}
\bibinfo{author}{\bibfnamefont{G.}~\bibnamefont{Vidal}},
  \bibinfo{journal}{Phys. Rev. Lett.} \textbf{\bibinfo{volume}{83}},
  \bibinfo{pages}{1046} (\bibinfo{year}{1999}).

\bibitem[{\citenamefont{Edmonds}(1967)}]{Edmonds-1967a}
\bibinfo{author}{\bibfnamefont{J.}~\bibnamefont{Edmonds}}, \bibinfo{journal}{J.
  Res. Nat. Bur. Standards} \textbf{\bibinfo{volume}{71B}},
  \bibinfo{pages}{241} (\bibinfo{year}{1967}).

\bibitem[{\citenamefont{Blum and Kannan}(1995)}]{Blum-1995a}
\bibinfo{author}{\bibfnamefont{M.}~\bibnamefont{Blum}} \bibnamefont{and}
  \bibinfo{author}{\bibfnamefont{S.}~\bibnamefont{Kannan}},
  \bibinfo{journal}{J. ACM} \textbf{\bibinfo{volume}{42}}, \bibinfo{pages}{269}
  (\bibinfo{year}{1995}).

\bibitem[{\citenamefont{Agrawal and Biswas}(2003)}]{Agrawal-2003a}
\bibinfo{author}{\bibfnamefont{M.}~\bibnamefont{Agrawal}} \bibnamefont{and}
  \bibinfo{author}{\bibfnamefont{S.}~\bibnamefont{Biswas}},
  \bibinfo{journal}{J. ACM} \textbf{\bibinfo{volume}{50}}, \bibinfo{pages}{429}
  (\bibinfo{year}{2003}).

\bibitem[{\citenamefont{Kabanets and Impagliazzo}(2004)}]{Kabanets-2004a}
\bibinfo{author}{\bibfnamefont{V.}~\bibnamefont{Kabanets}} \bibnamefont{and}
  \bibinfo{author}{\bibfnamefont{R.}~\bibnamefont{Impagliazzo}},
  \bibinfo{journal}{Computational Complexity} \textbf{\bibinfo{volume}{13}},
  \bibinfo{pages}{1} (\bibinfo{year}{2004}).

\bibitem[{\citenamefont{Hughston et~al.}(1993)\citenamefont{Hughston, Jozsa,
  and Wootters}}]{Hughston-1993a}
\bibinfo{author}{\bibfnamefont{L.~P.} \bibnamefont{Hughston}},
  \bibinfo{author}{\bibfnamefont{R.}~\bibnamefont{Jozsa}}, \bibnamefont{and}
  \bibinfo{author}{\bibfnamefont{W.~K.} \bibnamefont{Wootters}},
  \bibinfo{journal}{Phys. Lett. A} \textbf{\bibinfo{volume}{183}},
  \bibinfo{pages}{14} (\bibinfo{year}{1993}).

\bibitem[{\citenamefont{D\"ur et~al.}(2000)\citenamefont{D\"ur, Vidal, and
  Cirac}}]{Dur-2000a}
\bibinfo{author}{\bibfnamefont{W.}~\bibnamefont{D\"ur}},
  \bibinfo{author}{\bibfnamefont{G.}~\bibnamefont{Vidal}}, \bibnamefont{and}
  \bibinfo{author}{\bibfnamefont{J.~I.} \bibnamefont{Cirac}},
  \bibinfo{journal}{Phys. Rev. A} \textbf{\bibinfo{volume}{62}},
  \bibinfo{pages}{062314} (\bibinfo{year}{2000}).

\bibitem[{\citenamefont{Gour and Spekkens}(2006)}]{Gour-2006a}
\bibinfo{author}{\bibfnamefont{G.}~\bibnamefont{Gour}} \bibnamefont{and}
  \bibinfo{author}{\bibfnamefont{R.~W.} \bibnamefont{Spekkens}},
  \bibinfo{journal}{Phys. Rev. A} \textbf{\bibinfo{volume}{73}},
  \bibinfo{pages}{062331} (\bibinfo{year}{2006}).

\bibitem[{\citenamefont{Flanders}(1962)}]{Flanders-1962a}
\bibinfo{author}{\bibfnamefont{H.}~\bibnamefont{Flanders}},
  \bibinfo{journal}{J. Lon. Math. Soc.} \textbf{\bibinfo{volume}{37}},
  \bibinfo{pages}{10} (\bibinfo{year}{1962}).

\bibitem[{\citenamefont{Cubitt et~al.}(2008)\citenamefont{Cubitt, Montanaro,
  and Winter}}]{Cubitt-2008a}
\bibinfo{author}{\bibfnamefont{T.}~\bibnamefont{Cubitt}},
  \bibinfo{author}{\bibfnamefont{A.}~\bibnamefont{Montanaro}},
  \bibnamefont{and} \bibinfo{author}{\bibfnamefont{A.}~\bibnamefont{Winter}},
  \bibinfo{journal}{J. Math. Phys.} \textbf{\bibinfo{volume}{49}},
  \bibinfo{pages}{022107} (\bibinfo{year}{2008}).

\bibitem[{\citenamefont{Gurvits}(2003)}]{Gurvits-2003a}
\bibinfo{author}{\bibfnamefont{L.}~\bibnamefont{Gurvits}}, in
  \emph{\bibinfo{booktitle}{STOC '03: Proceedings of the thirty-fifth annual
  ACM symposium on Theory of computing}} (\bibinfo{publisher}{ACM},
  \bibinfo{year}{2003}).

\bibitem[{\citenamefont{Lov\'asz}(1989)}]{Lovasz-1989a}
\bibinfo{author}{\bibfnamefont{L.}~\bibnamefont{Lov\'asz}},
  \bibinfo{journal}{Bol. Soc. Braz. Mat.} \textbf{\bibinfo{volume}{20}},
  \bibinfo{pages}{87} (\bibinfo{year}{1989}).

\bibitem[{\citenamefont{Motwani and Raghavan}(1995)}]{Motwani-1995a}
\bibinfo{author}{\bibfnamefont{R.}~\bibnamefont{Motwani}} \bibnamefont{and}
  \bibinfo{author}{\bibfnamefont{P.}~\bibnamefont{Raghavan}},
  \emph{\bibinfo{title}{Randomized Algorithms}} (\bibinfo{publisher}{Cambridge
  University Press}, \bibinfo{year}{1995}).

\bibitem[{\citenamefont{Arora and Barak}(2009)}]{Arora-2009a}
\bibinfo{author}{\bibfnamefont{S.}~\bibnamefont{Arora}} \bibnamefont{and}
  \bibinfo{author}{\bibfnamefont{B.}~\bibnamefont{Barak}},
  \emph{\bibinfo{title}{Computational Complexity: A Modern Approach}}
  (\bibinfo{publisher}{Cambridge University Press}, \bibinfo{year}{2009}).

\bibitem[{\citenamefont{Schwartz}(1980)}]{Schwartz-1980a}
\bibinfo{author}{\bibfnamefont{J.~T.} \bibnamefont{Schwartz}},
  \bibinfo{journal}{J. ACM} \textbf{\bibinfo{volume}{27}}, \bibinfo{pages}{701}
  (\bibinfo{year}{1980}).

\bibitem[{\citenamefont{Valiant}(1979)}]{Valiant-1979a}
\bibinfo{author}{\bibfnamefont{L.~G.} \bibnamefont{Valiant}}, in
  \emph{\bibinfo{booktitle}{STOC '79: Proceedings of the eleventh annual ACM
  symposium on Theory of computing}} (\bibinfo{year}{1979}).

\bibitem[{\citenamefont{Liu and Regan}(2006)}]{Liu-2006a}
\bibinfo{author}{\bibfnamefont{H.}~\bibnamefont{Liu}} \bibnamefont{and}
  \bibinfo{author}{\bibfnamefont{K.~W.} \bibnamefont{Regan}},
  \bibinfo{journal}{Inf. Process. Lett} \textbf{\bibinfo{volume}{100}},
  \bibinfo{pages}{233} (\bibinfo{year}{2006}).

\end{thebibliography}

\end{document}